\newcommand{\past}{{\mathrm{p}}}
\newcommand{\future}{{\mathrm{f}}}
\newcommand{\hp}{\mathrm{h}_{\mathrm{p}}}
\newcommand{\hf}{\mathrm{h}_{\mathrm{f}}}
\newcommand{\utp}{u_t^{\past}}
\newcommand{\utf}{u_t^{\future}}
\newcommand{\ytp}{y_t^{\past}}
\newcommand{\wtf}{w_t^{\future}}
\newcommand{\xhatt}{\hat{x}_t}
\newcommand{\nxhat}{n_{\hat{x}}}
\newcommand{\up}{u^\mathrm{p}}
\newcommand{\uf}{u^\mathrm{f}}
\newcommand{\yp}{y^\mathrm{p}}
\newcommand{\wf}{w^\mathrm{f}}
\newcommand{\xp}{x^\mathrm{p}}
\newtheorem{problem}{Problem}
\newtheorem{definition}{Definition}
\newtheorem{assumption}{Assumption}
\newtheorem{remark}{Remark}
\newtheorem{thm}{Theorem}
\title{\LARGE \bf
System Identification for Virtual Sensor-Based Model Predictive Control: Application to a 2-DoF Direct-Drive Robotic Arm${}^{*}$
}
\author{Kosei Tsuji, Ichiro Maruta, Kenji Fujimoto, Tomoyuki Maeda, Yoshihisa Tamase, and Tsukasa Shinohara%
\thanks{${}^*$This work was partly supported by JSPS KAKENHI Grant Number JP24K00908 for the idea of the temporal weighting scheme.}%
\thanks{K. Tsuji, I. Maruta, and K. Fujimoto are with Department of Aeronautics and Astronautics, Graduate School of Engineering, Kyoto University, Kyotodaigaku-katsura, Nishikyo Ward, Kyoto City, Kyoto, 615-8540, Japan
        {\tt\small tsuji.kosei.68z@st.kyoto-u.ac.jp, maruta@kuaero.kyoto-u.ac.jp, k.fujimoto@ieee.org}}%
\thanks{T. Maeda, Y. Tamase, and T. Shinohara are with Kobe Steel, Ltd.,
        10-26, 2-chome Wakinohama-cho, Chuo-ward, Kobe City, Hyogo, 651-8585, Japan
        {\tt\small \{maeda.tomoyuki, tamase.yoshihisa, shinohara.tsukasa\}@kobelco.com}}%
}
\def\ps@IEEEtitlepagestyle{%
   \def\@oddhead{%
    \hbox to\textwidth{%
      \hfil
      \raisebox{1\baselineskip}[0pt][0pt]{%
        \parbox{\textwidth}{%
          \centering\footnotesize
          This is the accepted version of a paper published in the proceedings of the
      2025 IEEE Conference on Decision and Control (CDC).\\
      The final published paper can be found at \href{https://doi.org/10.1109/CDC57313.2025.11313006}{doi:10.1109/CDC57313.2025.11313006}.
        }%
      }%
      \hfil
    }%
  }%

  \def\@oddfoot{\hfill
  \fbox{%
    \parbox{1.9\linewidth}{%
      \footnotesize
      \copyright\ 2025 IEEE. Personal use of this material is permitted. Permission from IEEE must be obtained for all other uses, in any current or future media, including reprinting/republishing this material for advertising or promotional purposes, creating new collective works, for resale or redistribution to servers or lists, or reuse of any copyrighted component of this work in other works.
    }%
  }%
  \hfill}%
  \def\@evenfoot{\@oddfoot}
}
\begin{document}
\maketitle

\thispagestyle{IEEEtitlepagestyle}
\pagestyle{empty}

\begin{abstract}
Nonlinear Model Predictive Control (NMPC) offers a powerful approach for controlling complex nonlinear systems, yet faces two key challenges.
First, accurately modeling nonlinear dynamics remains difficult. 
Second, variables directly related to control objectives often cannot be directly measured during operation.
Although high-cost sensors can acquire these variables during model development, their use in practical deployment is typically infeasible. 
To overcome these limitations, we propose a Predictive Virtual Sensor Identification (PVSID) framework that leverages temporary high-cost sensors during the modeling phase to create virtual sensors for NMPC implementation.
We validate PVSID on a Two-Degree-of-Freedom (2-DoF) direct-drive robotic arm with complex joint interactions, capturing tip position via motion capture during modeling and utilize an Inertial Measurement Unit (IMU) in NMPC. 
Experimental results show our NMPC with identified virtual sensors achieves precise tip trajectory tracking without requiring the motion capture system during operation. 
PVSID offers a practical solution for implementing optimal control in nonlinear systems where the measurement of key variables is constrained by cost or operational limitations.
\end{abstract}

\section{Introduction}
Nonlinear Model Predictive Control (NMPC) has become a widely used method for the practical optimal control of nonlinear systems due to its flexibility to optimize various performance specifications.
Despite its widespread adoption in industrial applications such as power electronics \cite{harbi2023model}, energy management \cite{byrne2017energy}, and mobile robotics \cite{jin2024physical}, two major challenges persist when applying NMPC to complex nonlinear systems.

The first challenge is the construction of accurate dynamical models for complex nonlinear systems whose state is unknown.
Although the method for identifying linear systems is mature, the field of nonlinear system identification is still in the process of development \cite{nelles2020nonlinear}.
The second challenge arises from the difficulty in obtaining measurements of the control variables that are directly related to the objective function \cite{darby2012mpc}.
For example, in a robotic arm system, the primary variable to be controlled is typically the arm tip position, but it is difficult to directly measure.
Instead, it is common to capture the behavior of the joint angles.
In order to convert joint angles to tip coordinates, a kinematic model must be constructed.

Regarding the first challenge, prior studies have investigated system identification based on nonlinear state-space models \cite{beintema2023deep, gedon2021deep, yamada2023subspace}, which are capable of capturing a wide range of nonlinear phenomena \cite{schoukens2019nonlinear}. 
Most existing state-space models (e.g., \cite{beintema2023deep, gedon2021deep}) are not designed specifically for multi-step prediction and therefore rely on recursive computation for multi-step forecasts, while \cite{yamada2023subspace} yields a multi-step prediction model.
Multi-step prediction models are robust to modeling errors \cite{kohler2022state} and are suitable for integration with NMPC.
Indeed, the MPC method in \cite{yamada2023subspace} can be interpreted as a natural nonlinear extension of Subspace Predictive Control (SPC), which is known for its computational efficiency \cite{fiedler2021relationship}.

As for the second challenge, reinforcement learning (RL) has emerged as a promising solution to address the gap between measurable outputs and control objectives, due to its ability to directly optimize policies based on task-oriented reward. 
However, safely conducting the numerous trials required for RL in real-world environments remains challenging. 
While Sim2Real approaches ---  which utilize simulations for extensive training prior to real-world deployment --- are widely adopted to address this limitation, they fundamentally depend on accurate simulation models, thus creating a substantial reliance on comprehensive prior modeling \cite{diprasetya2024sim, hwangbo2019learning}. 
Another important approach is the construction of virtual sensors, which replace high-cost measurements by learning from available data during the modeling phase.
For nonlinear systems, methods based on dynamic linearization \cite{zhang2022virtual}, affine parameter-varying approximations \cite{masti2021machine}, and Long Short-Term Memory (LSTM) networks \cite{yuan2019nonlinear} have been proposed.
These virtual sensors can reconstruct the outputs of high-cost sensors during deployment using only low-cost sensor data.
However, existing methods mainly focus on current state estimation and do not provide multi-step prediction models essential for MPC.

In this paper, 
we propose Predictive Virtual Sensor Identification (PVSID), a method that simultaneously identifies nonlinear system dynamics and virtual sensors as a multi-step prediction model. 
Based on the structure in \cite{yamada2023subspace}, our approach enables parallel multi-step output prediction, making it particularly suitable for NMPC.
In addition, we introduce a training scheme tailored to NMPC, which emphasizes accurate near-future prediction.
This enables control systems to operate without expensive sensors while maintaining performance comparable to
fully-instrumented setups.
The PVSID framework is particularly valuable as the identified models integrate seamlessly with existing NMPC implementations, enabling advanced control for complex systems under practical constraints.
we validate PVSID through tip-position trajectory tracking control of a Two-Degree-of-Freedom (2-DoF) direct-drive robotic arm using NMPC. 
Direct-drive robots eliminate backlash and friction issues but introduce complex joint interactions that are difficult to model accurately \cite{asada1984analysis}, making them ideal testbeds for our data-driven approach.

\section{Predictive virtual sensor identification (PVSID)}\label{sec:VSID}
In this section, we introduce Predictive Virtual Sensor Identification (PVSID), a method for simultaneously identifying both virtual sensors and dynamics models.
We first present the formal problem formulation, followed by our specific implementation approach  based on neural networks.

\subsection{Problem formulation}
We assume that the behavior of the target system can be expressed as the following finite-dimensional discrete-time nonlinear dynamics:
\begin{equation}\label{eq:state_transfer_eq}
    x_{t+1}=f(x_t,u_t),
\end{equation}
where $t\in \mathbb{Z}$ is the time index,
$x_t\in \mathbb{R}^{n_x}$ is the state, $u_t\in \mathbb{R}^{n_u}$ is the input, and $f : \mathbb{R}^{n_x} \times \mathbb{R}^{n_u} \to \mathbb{R}^{n_x}$ is the state-transition function.
Next, we consider the two distinct observation functions:
\begin{align}
    y_t &= h_y(x_t), \label{eq:measure_y}\\
    w_t &= h_w(x_t), \label{eq:measure_w}
\end{align}
where $h_y: \mathbb{R}^{n_x}\to \mathbb{R}^{n_y}$ and $h_w: \mathbb{R}^{n_x}\to \mathbb{R}^{n_w}$ are measurement functions that yields outputs with the following characteristics:
\begin{itemize}
\item $y$: Output variables that are consistently measurable during both model construction and system operation, but do not directly correspond to the control objectives (e.g., joint angles in a robotic arm)
\item $w$: Output variables that directly relate to the control objectives but can only be measured during model construction due to practical constraints (e.g., precise tip position of a robotic arm).
\end{itemize}

To formalize the PVSID problem, we first define time series vectors of past and future input/output data relative to time $t$:
\begin{equation}
\begin{aligned}
  \utp&\coloneqq \begin{bmatrix}
      u_{t-\hp}\\
      u_{t-\hp+1}\\
      \vdots\\
      u_{t-1}
  \end{bmatrix},&
  \utf&\coloneqq\begin{bmatrix}
      u_{t}\\
      u_{t+1}\\
      \vdots\\
      u_{t+\hf-1}
  \end{bmatrix}.
\end{aligned}\label{eq:input-output_vector}
\end{equation}
The corresponding past and future output time series vectors, $\ytp$ and $\wtf$, are defined analogously.
Using these definitions, we can now formulate the PVSID problem as follows:
\begin{problem}\label{prob:vsid}
Given the measured input-output data $\left\{\left(u_t, y_t, w_t\right)\right\}^{T+\hf-1}_{t=-\hp+1}$ and the design parameters $\hp, \hf, \nxhat \in \mathbb{N}$, construct a model consisting of:
\begin{itemize}
\item A state estimator $E_\phi : \left(\utp, \ytp \right) \mapsto \hat{x}_t$,
\item An output predictor $P_\theta : \left(\hat{x}_t, \utf\right) \mapsto \wtf$,
\end{itemize}
where $\hat{x}_t \in \mathbb{R}^{\nxhat}$ represents a state equivalent to $x_t$, i.e., there exists a map $x_t \mapsto \xhatt$, and $T \in \mathbb{N}$ indicates the dataset size.
\end{problem}
\begin{remark}
This problem formulation extends conventional system identification by simultaneously identifying both the system dynamics and virtual sensors for the otherwise unmeasurable output $w$.
\end{remark}

To analyze the conditions under which Problem \ref{prob:vsid} has a solution, we introduce some auxiliary functions.
For notational convenience, we define the following based on the state-transition function $f$ in \eqref{eq:state_transfer_eq} and a general measurement function $h$.
\begin{equation}
\begin{aligned}
    f^{k}\bigl(x_{t}, &[u_{t}^{\top}, \dots, u_{t + k - 1}^{\top}]^{\top}\bigr) \\
    &\coloneqq f\bigl(\cdots f\bigl(f(x_{t}, u_{t}),\,u_{t+1}\bigr)\cdots,\,u_{t + k - 1}\bigr),
\end{aligned}
\end{equation}
\begin{equation}
\begin{aligned}
    h^{k}\bigl(x_{t}, &[u_{t}^{\top}, \dots, u_{t + k - 1}^{\top}]^{\top}\bigr)\\
&\coloneqq
\begin{bmatrix}
h(x_{t})\\
h\circ f\bigl(x_{t}, u_{t}\bigr)\\
\vdots\\
h\circ f^{k - 1}\Bigl(x_{t}, \bigl[u_{t}^{\top}, \dots, u_{t + k - 2}^{\top}\bigr]^{\top}\Bigr)
\end{bmatrix}.
\end{aligned}
\end{equation}
We next introduce the concept of uniform k-observability for nonlinear systems:
\begin{definition}[uniform $k$-observability \cite{moraal1995observer}]
A system is uniformly $k$-observable if the mapping $\mathbb{R}^{n_x}\times\left(\mathbb{R}^{n_u}\right)^k \to \left(\mathbb{R}^{n_y}\right)^{k} \times \left(\mathbb{R}^{n_u}\right)^{k}$ by $ \left(x,\mathbf{u} \right) \mapsto \left(h^{k}(x,\mathbf{u}),\mathbf{u}\right)$
is injective.
\end{definition}
This leads to the following assumption regarding system observability:
\begin{assumption}\label{assumption:k-ob}
The system consisting of \eqref{eq:state_transfer_eq} and \eqref{eq:measure_y} is uniformly $\hp$-observable, and the system consisting of \eqref{eq:state_transfer_eq} and \eqref{eq:measure_w} is uniformly $\hf$-observable.
\end{assumption}
Additionally, we assume that:
\begin{assumption}\label{assumption:num_state}
The dimension of the state estimate is sufficient, such that $\nxhat \geq n_x$.
\end{assumption}
Under these assumptions, we can establish the following theorem regarding the existence of a solution to Problem \ref{prob:vsid}:
\begin{thm}\label{thorem:vsid}
Problem \ref{prob:vsid} has a solution under the Assumption \ref{assumption:k-ob} and \ref{assumption:num_state}.
\end{thm}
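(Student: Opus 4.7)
The proof is constructive, building both $E_\phi$ and $P_\theta$ from the functions guaranteed to exist by the observability assumptions and the dimension bound. First I would invoke uniform $\hp$-observability of the $(f, h_y)$ system (Assumption~\ref{assumption:k-ob}): the map $(x_{t-\hp},\utp)\mapsto (h_y^{\hp}(x_{t-\hp},\utp),\utp)=(\ytp,\utp)$ is injective, so it admits a left inverse $\Psi$ on its image. Setting $x_{t-\hp}=\Psi(\ytp,\utp)$ and propagating through the dynamics gives $x_t=f^{\hp}\!\bigl(\Psi(\ytp,\utp),\utp\bigr)$, expressed entirely in terms of the past data $(\utp,\ytp)$.

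Next, Assumption~\ref{assumption:num_state} allows me to fix an injective map $\iota:\mathbb{R}^{n_x}\to\mathbb{R}^{\nxhat}$ (for instance, zero-padding), and define
\[
\hat{x}_t \coloneqq \iota(x_t),\qquad E_\phi(\utp,\ytp) \coloneqq \iota\!\circ f^{\hp}\!\bigl(\Psi(\ytp,\utp),\utp\bigr).
\]
The composite $x_t\mapsto\iota(x_t)=\hat{x}_t$ exists by construction, so the equivalence requirement in Problem~\ref{prob:vsid} is met. Denoting by $\iota^{-1}$ the left inverse of $\iota$ on its image, I would then take
\[
P_\theta(\hat{x}_t,\utf) \coloneqq h_w^{\hf}\!\bigl(\iota^{-1}(\hat{x}_t),\utf\bigr),
\]
which is well-defined since $\iota$ is injective and $f,h_w$ are given. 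This $P_\theta$ returns exactly $\wtf$ generated by \eqref{eq:state_transfer_eq}--\eqref{eq:measure_w} from the recovered $x_t$, so the pair $(E_\phi,P_\theta)$ solves Problem~\ref{prob:vsid}.

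The role of the second half of Assumption~\ref{assumption:k-ob} (uniform $\hf$-observability of $(f,h_w)$) is to guarantee that the constructed predictor is identifiable from its output: two distinct $\hat{x}_t$ produce distinguishable $\wtf$ sequences under some input, so no spurious redundancy inflates the state estimate. It is not invoked to produce the functions $E_\phi$ and $P_\theta$ themselves, but it underwrites the claim that the resulting multi-step model captures the effective dimensionality of the dynamics relevant to $w$.

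The main obstacle I would anticipate is the regularity of the left inverse $\Psi$. Uniform $\hp$-observability yields $\Psi$ only as a set-theoretic map on a possibly complicated image set inside $(\mathbb{R}^{n_y})^{\hp}\times(\mathbb{R}^{n_u})^{\hp}$; nothing in the assumptions enforces smoothness or continuity. For the existence statement in Theorem~\ref{thorem:vsid} this is harmless, since injectivity alone produces a well-defined function. However, any stronger claim (continuity, differentiability, or tractable learnability) would require additional hypotheses such as full-rank Jacobians along trajectories, which motivates the subsequent neural-network parameterization rather than demanding analytic inversion.
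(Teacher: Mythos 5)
Your proof is correct and follows essentially the same route as the paper: invert the past observation map via uniform $\hp$-observability, propagate forward with $f^{\hp}$ to obtain $x_t$, and take $h_w^{\hf}$ as the predictor. You are in fact slightly more careful than the paper in making the embedding $\iota:\mathbb{R}^{n_x}\to\mathbb{R}^{\nxhat}$ explicit where Assumption~\ref{assumption:num_state} is used, and your observation that the $\hf$-observability half of Assumption~\ref{assumption:k-ob} is not needed for this existence argument (it is used only in Theorem~\ref{thorem:vsid_NN}) is accurate.
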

\begin{proof}
From Assumption \ref{assumption:k-ob}, the following mapping
\begin{equation}\label{eq:phiy-map}
  \Phi_y^{\hp} :
    \left\{\,\left(\up,\,h_y^{\hp}(\xp, \up)\right) 
      \bigm|
      \xp \in \mathbb{R}^{n_x},
      \up \in (\mathbb{R}^{n_u})^{\hp}
    \right\}
    \to \mathbb{R}^{n_x}
\end{equation}
that maps $\left(\utp, \ytp\right) \mapsto x_{t-\hp}$ exists.
Using the state transition function and $\Phi^{\hp}_y$, the following mapping
\begin{equation}\label{eq:psi-map}
  \Psi :
    \left\{\,\left(\up,\,h_y^{\hp}(\xp, \up)\right) 
      \bigm|
      \xp \in \mathbb{R}^{n_x},
      \up \in (\mathbb{R}^{n_u})^{\hp}
    \right\}
    \to \mathbb{R}^{n_x}
\end{equation}
that maps $\left(\utp, \ytp\right) \mapsto f^{\hp}\left(\Phi^{\hp}_y(\utp, \ytp), \utp\right)$ also exists.
Under Assumption \ref{assumption:num_state}, we can choose $\Psi$ and $h_w^{\hf}$ as the state estimator $E_\phi$ and the output predictor $P_\theta$, respectively. As a result, the existence of a solution to Problem \ref{prob:vsid} is guaranteed.
\end{proof}

\subsection{Neural network based modeling method}
\begin{figure}[bpt]
  \centering
  \includegraphics[width=0.7\linewidth]{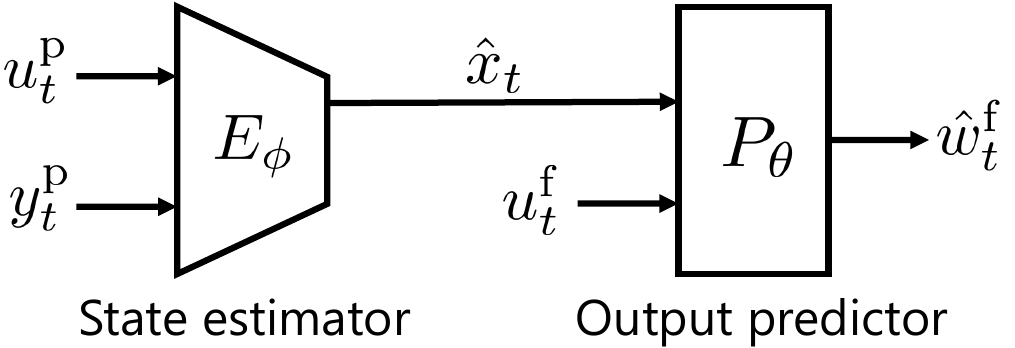}
  \caption{Structure of the neural network model for Predictive Virtual Sensor IDentification (PVSID). The sequence of past input-output $\left(\utp, \ytp\right)$ is reduced to current estimate $\xhatt$ and then reflected in the prediction of the sequence of future output $\wtf$ driven by $\utf$.}
  \label{fig:epnet}
\end{figure}

We describe a neural network architecture specifically designed for NMPC applications, where low computational cost and accurate near-future prediction are critical.

Figure~\ref{fig:epnet} illustrates the structure of our proposed PVSID model.
Both the state estimator $E_{\phi}$ and output predictor $P_{\theta}$ are implemented as neural networks.
Unlike recursive models, it predicts the entire horizon in a single forward pass, allowing for efficient gradient computation without recursive Jacobian propagation. This significantly reduces the computational cost of NMPC.

Furthermore, in this paper, we introduce a temporal weighting scheme to this loss function, which prioritizes accuracy on near-future predictions:
\begin{equation}\label{eq:epnet-optim_decay}
  \underset{\phi,\theta} {\text{min}} ~~~\frac{1}{T}\sum_{t=1}^T\sum_{k=1}^{\hf} \gamma^{k-1} \left\|\left(\wtf-P_{\theta}(E_{\phi}(\utp, \ytp), \utf)\right)_k\right\|^2.
\end{equation}
Here, $\phi$ and $\theta$ represent the neural network weight parameters for the state estimator and output predictor, respectively. 
The coefficient $\gamma\in(0,1]$ is a discount factor that reduces the influence of prediction errors as the time horizon extends, and $(\cdot)_k$ denotes the $k$-th time step component of the prediction error.
The introduction of a discount factor for non-recursive structures allows the model allocates its representational capacity according to prediction importance, prioritizing near-future accuracy without compromising learning efficiency.
This is particularly beneficial for nonlinear system identification, which often requires large amounts of data to accurately capture complex dynamics.
\begin{remark}
The ability to make this trade-off is not available in recursive prediction architectures, where improving short-term prediction accuracy inherently affects computational demands across the entire prediction horizon.
\end{remark}

The following theorem guarantees that the proposed model trained under \eqref{eq:epnet-optim_decay} is a solution to problem \ref{prob:vsid}.
\begin{thm}\label{thorem:vsid_NN}
Assuming `perfect training', i.e., for all $x^\mathrm{p} \in \mathbb{R}^{n_x}$, $\up \in (\mathbb{R}^{n_u})^{\hp}$, and $\uf \in (\mathbb{R}^{n_u})^{\hf}$, the equation:
\begin{equation}\label{eq:perfect_train}
    P_{\theta}(E_{\phi}(\up, \yp), \uf) = w^\mathrm{f},
\end{equation}
where $w^\mathrm{f}(x^\mathrm{p}, \up, \uf) \coloneqq h_w^{\hf}(x, \uf)$,
$x(x^\mathrm{p}, \up)\coloneqq f^{\hp}(x^\mathrm{p}, \up)$, and $\yp(x^\mathrm{p}, \up) \coloneqq h_y^{\hp}(x^\mathrm{p}, \up)$, holds.
Then, under Assumptions \ref{assumption:k-ob} and \ref{assumption:num_state}, the state estimate $\hat{x}$ is equivalent to $x$, i.e., there exists a map $E_\phi\left(\up, h_y^{\hp}(x^\mathrm{p}, \up)\right) \mapsto f^{\hp}(x^\mathrm{p}, \up)$.
\end{thm}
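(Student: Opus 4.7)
The plan is to leverage uniform $\hf$-observability of $(f, h_w)$ to convert the statement ``the network's predicted output trajectory matches the true one for every future input'' into ``the state estimate uniquely determines the rolled-forward state'', and then read off the required map.

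First, I would fix an arbitrary $(\xp, \up)$, set $\hat{x} \coloneqq E_\phi(\up, h_y^{\hp}(\xp, \up))$ and $x \coloneqq f^{\hp}(\xp, \up)$, and note that $\hat{x}$ does not depend on $\uf$. Substituting into the perfect training identity \eqref{eq:perfect_train} then yields
\[
  P_\theta(\hat{x}, \uf) = h_w^{\hf}(x, \uf) \quad \text{for every } \uf \in (\mathbb{R}^{n_u})^{\hf}.
\]
Next, I would show that the assignment $\hat{x} \mapsto x$ is single-valued. Suppose two pairs $(\xp_1, \up_1)$ and $(\xp_2, \up_2)$ produce the same state estimate $\hat{x}$, and write $x_i \coloneqq f^{\hp}(\xp_i, \up_i)$. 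Applying the display above to each pair gives
\[
  h_w^{\hf}(x_1, \uf) = P_\theta(\hat{x}, \uf) = h_w^{\hf}(x_2, \uf) \quad \text{for every } \uf.
\]
By the uniform $\hf$-observability of $(f, h_w)$ in Assumption \ref{assumption:k-ob}, the map $(x, \uf) \mapsto (h_w^{\hf}(x, \uf), \uf)$ is injective, so in particular $x \mapsto h_w^{\hf}(x, \uf)$ is injective for each fixed $\uf$; fixing any $\uf$ in the identity above therefore forces $x_1 = x_2$. Hence the assignment $\hat{x} \mapsto x$ is well defined on the image of $E_\phi$, which is exactly the map $E_\phi(\up, h_y^{\hp}(\xp, \up)) \mapsto f^{\hp}(\xp, \up)$ asserted by the theorem.

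The main obstacle I foresee is conceptual rather than technical: one must carefully separate the roles of the quantifiers, keeping $(\xp, \up)$ fixed when defining $\hat{x}$ while letting $\uf$ range freely so that observability can be invoked. Assumption \ref{assumption:num_state} plays only a passive role, guaranteeing that the estimate space $\mathbb{R}^{\nxhat}$ has enough dimension that distinct true states $x$ need not be artificially collapsed into a common $\hat{x}$ in the first place; the substantive content of the theorem is carried by Assumption \ref{assumption:k-ob} together with the perfect training hypothesis.
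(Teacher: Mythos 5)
Your proof is correct and follows essentially the same route as the paper: the paper constructs the inverse observability map $\Phi_w^{\hf}$ from Assumption \ref{assumption:k-ob}, composes it with $P_\theta$ using the perfect-training identity, and fixes $\uf$ to obtain the map $\hat{x}\mapsto x$, which is precisely the injectivity argument you run in the form of a well-definedness check. Your phrasing is arguably slightly more careful in that it works only on the image of $E_\phi$ (whereas the paper's $\Gamma$ is nominally defined on all of $\mathbb{R}^{\nxhat}\times(\mathbb{R}^{n_u})^{\hf}$), but the substance is identical.
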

\begin{proof}
From Assumption \ref{assumption:k-ob}, the following mapping
\begin{equation}\label{eq:phiw-map}
  \Phi_w^{\hf} :
    \left\{\,\left(\uf,\,h_w^{\hf}(x, \uf)\right) 
      \bigm|
      x \in \mathbb{R}^{n_x},
      \uf \in (\mathbb{R}^{n_u})^{\hf}
    \right\}
    \to \mathbb{R}^{n_x}
\end{equation}
that maps $(\uf, \wf) \mapsto x$ exists.
Using the assumption of the perfect training \eqref{eq:perfect_train} and $\Phi_w^{\hf}$, the following mapping
\begin{equation}\label{eq:Gamma}
    \Gamma: \mathbb{R}^{n_{\hat{x}}} \times (\mathbb{R}^{n_u})^{\hf} \to \mathbb{R}^{n_x}
\end{equation}
that maps $(\hat{x}, \uf) \mapsto \Phi_w^{\hf} \left(\uf, P_\theta(\hat{x}, \uf) \right)$ exists.
By setting $\uf$ to a fixed vector value, such as the zero vector, $\Gamma$ reduces to a mapping from $\hat{x}$ to $x$.
\end{proof}
The universal-approximation theorem ensures that, given sufficient model capacity and training data, neural networks can achieve `perfect training' to arbitrary accuracy. Accordingly, the theorem should be regarded as applying only to systems whose behavior lies within the expressive power of the chosen architecture.
The practical plausibility of this `perfect training' assumption is supported--- at least in part ---by the real-world experiment reported in Section~\ref{sec:Experiments}.

\section{NMPC with PVSID-identified model}\label{sec:method_NMPC}
In this section, we demonstrate how the model identified through PVSID can be effectively applied to NMPC.
Our approach builds upon the methodology presented in \cite{yamasaki2024deep}.

The system state is estimated at each time step using the previously trained state estimator $E_\phi$ as $\hat{x}_t = E_\phi(\utp, \ytp)$.

Based on this state estimate, the control input is computed by solving an optimization problem, which minimizes a specified evaluation function over the prediction horizon:
\begin{align}  
\underset{\hat{{u}}_t^{\mathrm{f}}, \hat{{w}}_t^{\mathrm{f}}}{\operatorname{min}} \quad& L\left(\hat{{u}}_t^{\mathrm{f}}, \hat{w}_t^{\mathrm{f}}\right), &
\text { subject to } \quad& \hat{{w}}_t^{\mathrm{f}}=P_\theta\left(\hat{x}_t, \hat{{u}}_t^{\mathrm{f}}\right).
\end{align}

Here, the evaluation function  
$L: \left(\mathbb{R}^{n_u}\right)^{\hf} \times \left(\mathbb{R}^{n_y}\right)^{\hf}\to\mathbb{R}$ 
quantifies control performance, and is designed to address specific control objectives such as trajectory tracking while minimizing control effort.  
Note that constraints are not included in this formulation to focus on the PVSID methodology.

Since NMPC requires solving optimization problems at each step, computational efficiency is crucial for real-time implementation. 
To simplify the computational procedure while maintaining sufficient expressivity for our case study, we formulate the evaluation function in a quadratic form:
\begin{equation}\label{eq:objective_func_NMPC}  
L\left(\hat{{u}}_t^{\mathrm{f}}, P_\theta\left(\hat{\boldsymbol{x}}_t, \hat{{u}}_t^{\mathrm{f}}\right)\right)=\frac{1}{2} \ell\left(\hat{{u}}_t^{\mathrm{f}}\right)^{\top} \ell\left(\hat{{u}}_t^{\mathrm{f}}\right)  
\end{equation}  
where $\ell$ is a vector-valued function designed according to specific control objectives. 
This nonlinear least-squares structure can be efficiently solved using the Levenberg-Marquardt (LM) method \cite{levenberg1944method, marquardt1963algorithm}, with the update rule:
\begin{equation}  
\hat{u}_t^{\mathrm{f}}[l+1] = \hat{u}_t^{\mathrm{f}}[l] - \bigl(J_t[l]^{\intercal}J_t[l] + \lambda \mathbb{I}\bigr)^{-1} J_t[l]^\intercal\ell(\hat{u}_t^{\mathrm{f}}[l]),  
\end{equation}
where, $[l]$ denotes the iteration index of the optimization, and $J_t[l]$ is the Jacobian matrix of the error $\ell(\hat{u}_t^\mathrm{f})$ at the $l$-th iteration, and $\lambda$ controls the step size.  

This update is performed up to the maximum number of iterations per control cycle.
Once the optimization has finished, the first entry of $\hat{u}_t^\mathrm{f}$ is applied and the subsequent control cycle starts.

\begin{remark}
In our implementation, the neural network architecture enables efficient computation of the Jacobian through automatic differentiation, and the optimization is initialized with the solution from the previous control cycle to improve convergence speed.
\end{remark}

\section{Experiments}\label{sec:Experiments}

This section validates the proposed PVSID method through experiments on a 2-DoF direct-drive robotic arm.
Direct-drive robot arms exhibit complex nonlinear dynamics due to significant interference torques between links, making them challenging control targets.
The primary control objective is precise tip position tracking, while the tip position itself can only be measured using high-cost systems like motion capture.
This scenario perfectly aligns with PVSID's problem formulation.

\subsection{Experimental setup}

\begin{figure*}
    \centering
    \includegraphics[width=0.90\linewidth]{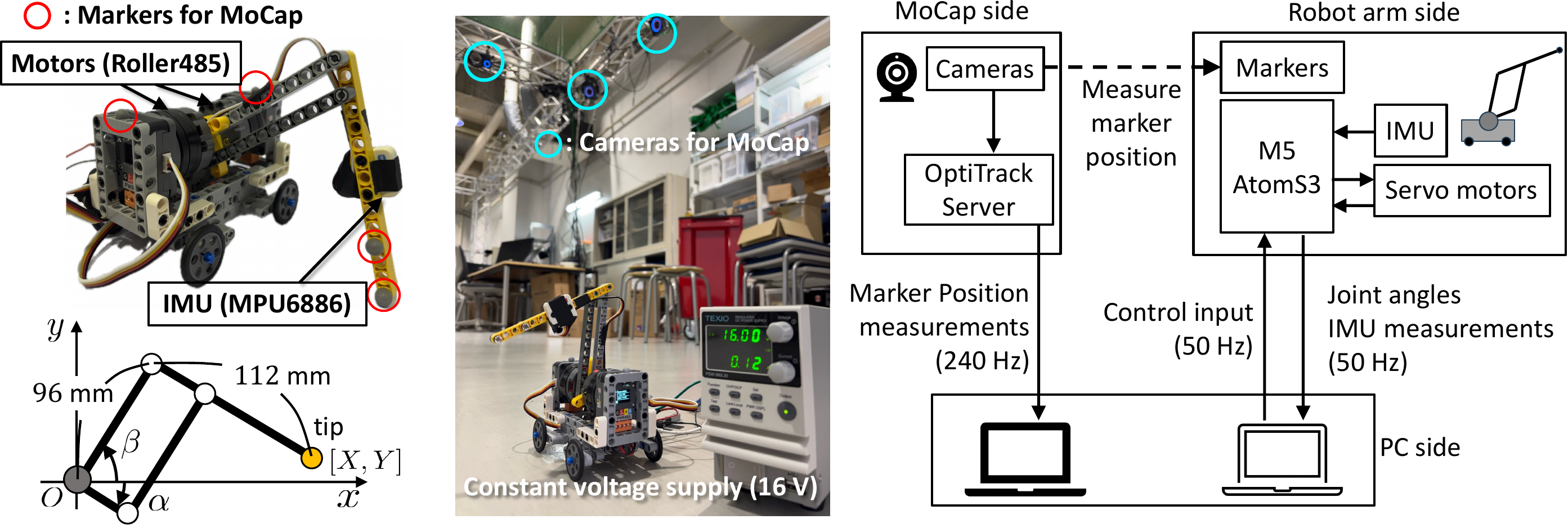}
    \caption{Overview of the experimental setup. Left: The developed 2 DoF direct-drive robotic arm and the schematic diagram. Middle: Experimental device configuration, including the motion capture system (MoCap) and controlled hardware. Right: Communication architecture between components.}
    \label{fig:experimental_setup}
\end{figure*}

As shown in Fig.~\ref{fig:experimental_setup}, we used a 2-DoF direct-drive robotic arm built from LEGO parts, driven by two parallel Roller 485 actuators in position control mode.
An Inertial Measurement Unit (IMU, MPU6886) mounted on the arm was used to detect subtle vibrations due to structural looseness inherent in LEGO construction and undetectable from joint angle alone. 
The experiment was conducted at the motor's rated voltage of \SI{16}{V}.

During model construction, precise tip position data were collected using an OptiTrack motion capture system.
The control system operates at \SI{50}{Hz}, with an M5Stack-ATOMS3 microcontroller serving as interface between the external PC and both the actuators and the IMU.
This setup provides joint angles, angular velocity, and acceleration data ---the $y$ variables in the PVSID framework representing the sensory information available during both training and operation.

\subsection{Experimental configuration}
The robot coordinate system and structure are illustrated in the left of Fig.~\ref{fig:experimental_setup}.
The direct and inverse kinematics relating joint angles $[\alpha, \beta]$ to the tip position $[X, Y]$ follow standard formulations for a 2-DoF planar manipulator, with physical constraints restricting the configuration to a specific solution branch.

\subsection{Data collection for PVSID}
For the PVSID implementation, 
we defined the tip position as $w$, obtained from motion capture and expressed in the robot coordinate system, 
and the combination of joint angles and IMU data (angular velocity and acceleration) as $y$.
The control input $u$ consists of reference joint angles for the position-controlled motors. 

To generate identification data, we 
constructed a single, continuous trajectory by sequentially connecting straight-line segments between randomly selected target tip positions and tip velocities (\SIrange{30}{300}{mm/s}) within the region reachable by the tip, as determined by the geometric structure of the arm.
The arm followed this trajectory, which effectively explored the operational space and ensured non-periodic, diverse motions.
During data collection, the reference tip coordinates along the trajectory were converted into reference joint angles (control input $u$) using inverse kinematics.

Data was collected for one hour at \SI{50}{Hz} sampling rate, while the tip position data, initially acquired at \SI{240}{Hz}, was resampled to \SI{50}{Hz} to align with the control period for simplicity.
The collected dataset was split into training, validation, and test sets, with dataset sizes of approximately \num{110000}, \num{40000}, and \num{40000} samples, respectively.

\subsection{Neural network training and validation}

\begin{figure}[t]
\centering
\includegraphics[width=0.9\linewidth]{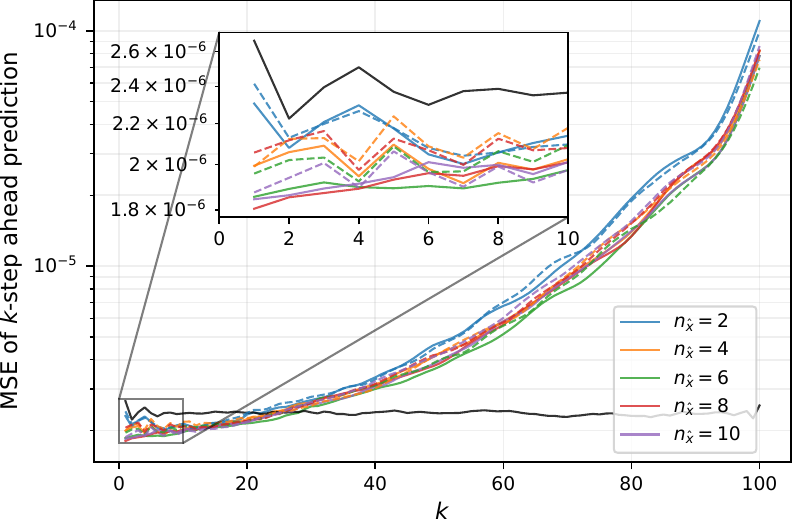}
\caption{
Mean squared error  of $k$-step ahead predictions for test data. 
The black line shows the model with $\gamma = 1.0$, $n_{\hat{x}} = 8$, and IMU. 
Other $10$ lines represent models trained with $\gamma = 0.9$ for $n_{\hat{x}} \in \{2, 4, 6, 8, 10\}$, with and without IMU.
The solid and dotted lines correspond to the results with and without IMU, respectively.
}\label{fig:compare_prediction_horizon}
\end{figure}

\begin{figure}[t]
\centering
\includegraphics[width=0.9\linewidth]{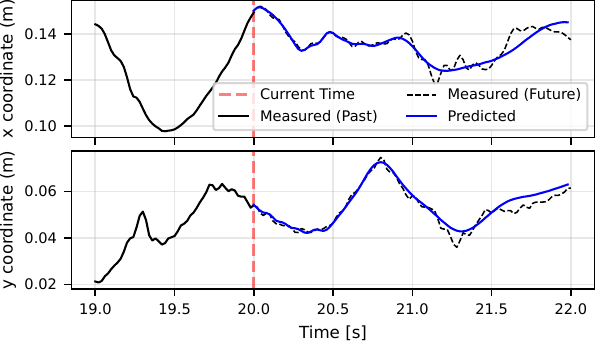}
\caption{Model prediction results at $t=\SI{20}{s}$ in the test dataset.
The red dotted line indicates the current time.
The black solid line shows the past actual output, the black dotted line shows the future actual output, and the blue line shows the predicted future output.
}
\label{fig:sanpshot_prediction}
\end{figure}

The state estimator $E_{\phi}$ and the output predictor $P_{\theta}$ are implemented as feedforward neural networks with three hidden layers of $32$ and $256$ neurons, respectively; all layers use the ReLU activation function.  
The past and future horizons are set to $\hp = 50$ and $\hf = 100$, respectively.
The model was implemented using Python/PyTorch and trained on a computer with an AMD Ryzen 9 9950X and an NVIDIA GeForce RTX 4090.
To prevent overfitting, the AdamW \cite{loshchilov2019} optimizer was used with the default parameters provided by PyTorch.
Training ran for \num{100000} epochs with the final model selected based on the lowest validation loss computed by \eqref{eq:epnet-optim_decay}.
The total training time was approximately $40$ minutes.

To evaluate how the inclusion of IMU data improves arm dynamics modeling, we trained ten models with different configurations, comparing models using only joint angles versus models using both joint angles and IMU data, while varying the dimension of state estimate $n_{\hat{x}}$ in $\{2,4,6,8,10\}$.
Fig.~\ref{fig:compare_prediction_horizon} shows how the Mean Squared Error (MSE) of $k$-step ahead prediction:
\begin{equation}
    \frac{1}{M}\sum_{t=1}^{M}\left\|w_{t+k-1}-\left(\hat{w}^{\mathrm{f}}_t\right)_{k}\right\|^2 \quad (k = 1, 2, \cdots, \hf),
\end{equation}
where $M$ is the test dataset size, increases with the prediction step $k$, which aligns with our MPC-focused design prioritizing near-future accuracy.
The inset of Fig.~\ref{fig:compare_prediction_horizon} highlights the short-term prediction performance, which is crucial for effective MPC implementation,
With IMU data, significant MSE reduction occurs when $n_{\hat{x}} \geq 6$,  suggesting that the IMU effectively captures subtle arm dynamics.
In contrast, models without IMU data show modest improvement only at $n_{\hat{x}} = 10$, indicating that even with increased state dimension, these models struggle to capture the complex dynamics that IMU measurements readily provide.

We selected the model with IMU and $n_{\hat{x}} = 8$ for optimal short-term prediction accuracy.
Fig.~\ref{fig:sanpshot_prediction} illustrates test data predictions for this model, demonstrating its ability to accurately predict tip position, particularly in the near-future.
Furthermore, the black line in Fig. ~\ref{fig:compare_prediction_horizon} shows the performance of the model trained with $\gamma=1.0$ (i.e., without temporal discounting) using IMU data and $n_{\hat{x}}=8$. Compared to this, models trained with $\gamma=0.9$ achieve lower MSE values, indicating that the the discount factor improves short-term prediction performance.

\subsection{NMPC configuration}
We implemented NMPC with virtual sensors described in Section \ref{sec:method_NMPC}.
The error function $\ell$ in the objective function \eqref{eq:objective_func_NMPC} is designed as:
\begin{align}
\ell(u_t^{\mathrm{f}}) &= \begin{bmatrix}
\left(P_\theta\left(\xhatt, u_t^{\mathrm{f}}\right) - r_t^\mathrm{f}\right)/w_{\mathrm{std}}\\
\zeta\Delta(\uf_t)/u_{\mathrm{std}}
\end{bmatrix}\in \mathbb{R}^{\hf(n_\mathrm{u}+n_\mathrm{w})},\\
\Delta(\uf_t) &\coloneqq \left(\begin{bmatrix}
    \left(\up_t\right)_{\hp}\\
    \left(\uf_t\right)_{1:(\hf-1)}
\end{bmatrix} - u_t^{\mathrm{f}}\right)\in \left(\mathbb{R}^{n_\mathrm{u}}\right)^{\hf},
\end{align}
where $r_t^\mathrm{f} \in \left(\mathbb{R}^{n_w}\right)^{\hf}$ is the reference output, $\left(\uf_t\right)_{1:(\hf-1)}$ denotes the $\uf_t$ without the last time step component, and $\Delta(\uf_t)$ measures the difference between consecutive control input across time steps, effectively penalizing rapid change in reference position.
The normalization factors $w_{\mathrm{std}}$ and $u_{\mathrm{std}}$ are the standard deviations of output $w$ and input $u$ from the training data.
The weighting parameter $\zeta=0.5$ balances tacking accuracy against input smoothness, with higher values producing more gradual control actions.

The optimization was solved using the LM method, as described in Section \ref{sec:method_NMPC}, with $\lambda=\SI{1e-7}{}$, which is empirically determined,
on a laptop PC with Intel Core i9-14900HX and NVIDIA GeForce RTX 4090 Laptop GPU.
Each iteration took approximately \SI{6}{ms}.

\subsection{Trajectory tracking results}

\begin{figure}[t]
\centering
\includegraphics[width=0.9\linewidth]{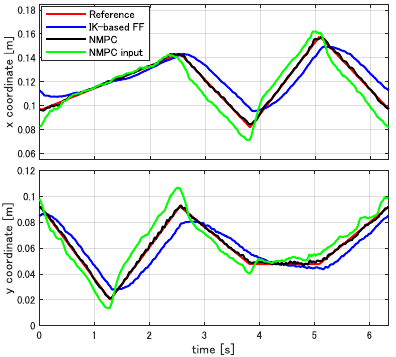}
\caption{Time series of tip position coordinates during star-shaped trajectory tracking. 
The red line represents the reference tip position, which also corresponds to the input of inverse kinematics-based feedforward control (IK-based FF) after conversion to tip coordinates via forward kinematics.
The black and blue lines show the measured tip positions under NMPC control and IK-based FF, respectively. 
The green line depicts the control input generated by NMPC, expressed in tip coordinates using forward kinematics.
The tip position MSE under NMPC control was significantly lower at $(\SI{1.3}{mm})^2$, compared to $(\SI{11.5}{mm})^2$ with IK-based FF control.
}\label{fig:star_traj_time_trans}
\end{figure}

To validate the effectiveness of the proposed method, we conducted a star-shaped trajectory tracking experiment comparing NMPC with a model identified through PVSID against a feedforward control approach based on inverse kinematics (IK-based FF), where the target joint angles were obtained via inverse kinematics from the desired tip trajectory. 
The experiment was performed at a tip velocity of \SI{60}{mm/s}.

Figure~\ref{fig:star_traj_time_trans} presents the results for one complete cycle of the star-shaped trajectory. 
The blue line depicts the tip position under a simple inverse kinematics-based approach where target joint angles calculated from desired tip positions were directly fed to the actuators, relying solely on the actuators' internal controllers.
The red line represents the target tip position, the black line shows the measured tip position under NMPC control,
and the green line depicts the control input by NMPC, that is the reference joint angles,  converted to tip coordinates through forward kinematics for visualization purposes.

The significant time lag in the blue trajectory relative to the target demonstrates the substantial actuator dynamics and delays inherent in the system.
This highlights the necessity of incorporating a predictive feedforward term based on an accurate system model.
In contrast, the close alignment between the NMPC trajectory and the target trajectory clearly demonstrates that the proposed NMPC successfully compensates for system dynamics and kinematic errors, enabling precise trajectory tracking. 
The subtle high-frequency oscillations seen in both trajectories are due to the vibrations form structural looseness.
Notably, the green line reveals that the NMPC strategy deliberately introduces low-frequency oscillatory patterns in the reference inputs to counteract complex nonlinear interlink dynamics.

This demonstrates that the PVSID-identified model captured complicated interlink dynamics and structural characteristics specific to direct-drive robots—complex behaviors that would be challenging to model through traditional approaches.

\section{Conclusion}\label{sec:conclusion}

This paper introduced Predictive Virtual Sensor Identification (PVSID), a method addressing key challenges in applying Nonlinear Model Predictive Control (NMPC) to nonlinear systems, particularly 
the difficulty in measuring variables directly related to control objectives. 
PVSID leverages high-cost sensors only during the modeling phase and replaces them with virtual sensors capable of accurate predictions in actual operation.

Experimental validation on a 2-DoF direct-drive robotic arm demonstrated that PVSID effectively captured complex nonlinear dynamics, including joint interactions and structural characteristics. 
Integrating the identified model into NMPC allowed precise tip trajectory tracking without the operational use of expensive motion capture systems.
The results indicate that PVSID effectively leverages diverse sensor data to anticipate and compensate for complex system behaviors.

Future work will explore the integration of input and output constraints into NMPC formulations, as well as extend the approach to higher-dimensional systems such as multi-DoF robotic arms, to further evaluate scalability and robustness in more complex control scenarios.

\bibliographystyle{IEEEtran}
\bibliography{ref.bib}

\end{document}